\documentclass[journal,twoside,web]{ieeecolor}
\usepackage{tuffc}
\usepackage{cite}
\usepackage{amsmath,amssymb,amsfonts}
\usepackage{algorithmic}
\usepackage{subcaption}
\usepackage{graphicx}
\usepackage{algorithm,algorithmic}
\usepackage{nicematrix}
\usepackage{hyperref}
\hypersetup{hidelinks=true}
\usepackage{multirow}
\usepackage{textcomp}
\usepackage{wrapfig,colortbl}
\definecolor{abstractbg}{rgb}{1,0.969,0.914}
\def\BibTeX{{\rm B\kern-.05em{\sc i\kern-.025em b}\kern-.08em
    T\kern-.1667em\lower.7ex\hbox{E}\kern-.125emX}}
\newtheorem{thm}{Theorem}[section]

\newtheorem{rem}{Remark}
\begin{document}
\title{Optimal GNSS Time Tracking for Long-term Stable Time Realisation in Synchronised Atomic Clocks}
\author{Maitreyee Dutta, Jiayu Chen, Masakazu Koike, Yuichiro Yano, Yuko Hanado, Takayuki Ishizaki
\thanks{This work is supported by the Ministry of Internal Affairs and Communications (MIC) under its “Research and Development for Expansion of Radio Resources (JPJ000254)” program.}
\thanks{Maitreyee Dutta, Jiayu Chen and Takayuki Ishizaki are with Department of Systems and Control Engineering, Institute of Science Tokyo, 2-12-1, Ookayama, Meguro, Tokyo, 152-8552, Japan.}
\thanks{Masakazu Koike is with Tokyo University of Marine Science and Technology, 4-5-7, Kounan, Minato, Tokyo, 108-0075, Japan.} \thanks{Yuichiro Yano and Yuko Hanado are with National Institute of Information and Communications Technology, 4-2-1,
Nukui-Kitamachi, Koganei, Tokyo, 184-8795 , Japan.}}
\IEEEtitleabstractindextext{%
\fcolorbox{abstractbg}{abstractbg}{%
\begin{minipage}{\textwidth}\rightskip2em\leftskip\rightskip\bigskip
\begin{wrapfigure}[11]{r}{3in}%
\hspace{-4.5pc}
\includegraphics[width=1\linewidth]{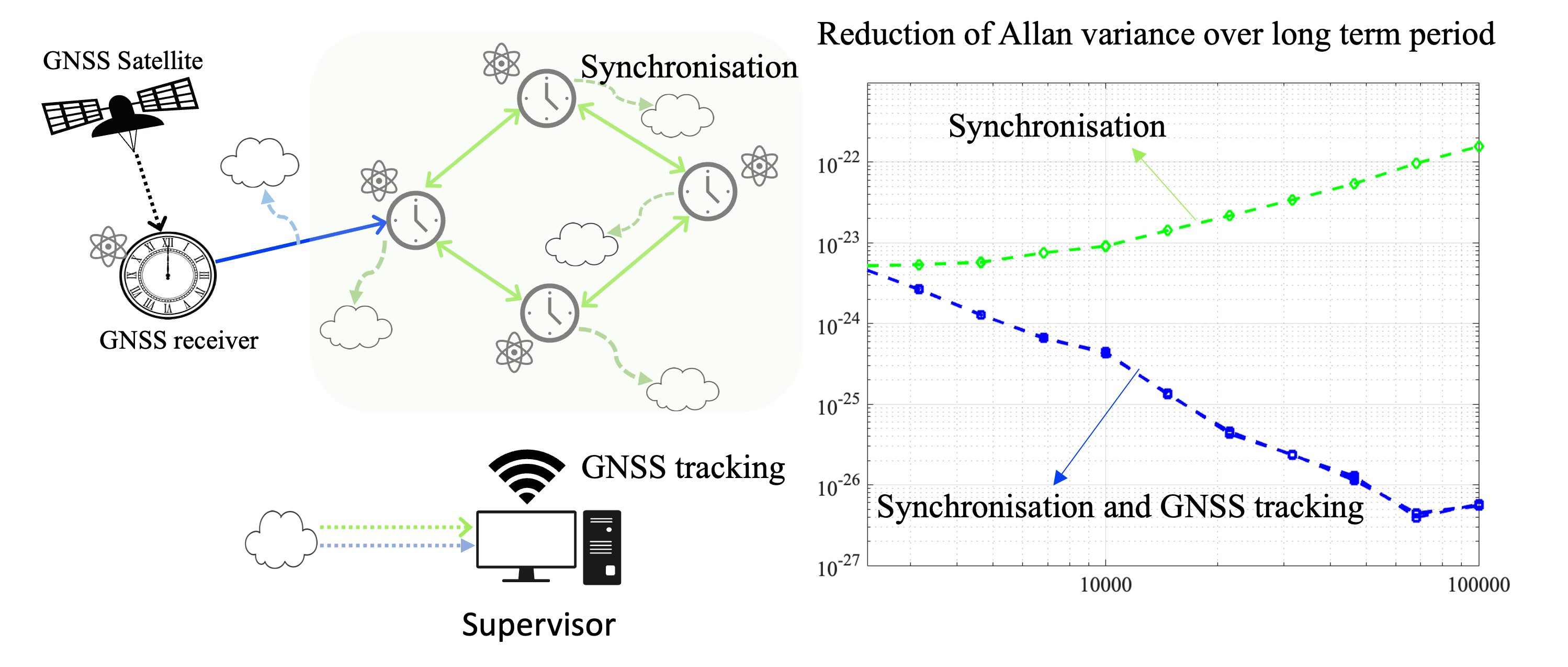}
\end{wrapfigure}%
\begin{abstract}
In this manuscript, we propose a novel optimal Global Navigation Satellite System (GNSS) time tracking algorithm to collectively steer an ensemble consisting of synchronising miniature atomic clocks towards standard GNSS time.
The synchronising miniature atomic clocks generate a common synchronised time which has good short term performance but its accuracy and precision, which is measured by Allan variance, deteriorates in the long run. 
So, a supervisor designs and periodically broadcasts the proposed GNSS time tracking control to the ensemble miniature atomic clocks that steer the average of ensemble towards the average of GNSS receivers, which are receivers of GNSS time.
The tracking control is constructed using a Kalman filter estimation process that estimates the difference in average of GNSS receivers and average of ensemble clocks by using relative clock readings between GNSS receivers and their adjacent ensemble clock.
Under the influence of the periodically received tracking control, the stabilised ensemble clocks have better long term accuracy and precision over long averaging periods. 
Since the tracking control is designed to solely influence the average of the ensemble, the tracking process does not interfere with the synchronisation process and vice versa.
The feedback matrix associated with the tracking control is obtained from an optimisation problem that minimises steady-state Allan variance.
Numerical results are provided to show the efficacy of the proposed algorithm for enhancing long term performance.
\end{abstract}

\begin{IEEEkeywords}
Atomic clocks, Control theory,  Kalman filter, Networked systems
\end{IEEEkeywords}
\bigskip
\end{minipage}}}

\maketitle

\section{Introduction}
\label{sec:introduction}
\IEEEPARstart{S}{ince} the last century, the advent of complex technological developments in the field of telecommunications, broadcasting, financial transactions, scientific exploration, etc., has prompted the requirement of  accurate timekeeping by synchronised atomic frequency standards \cite{csptp}. 
An atomic frequency standard is a frequency standard whose reference is the resonant frequency of atoms such as rubidium and cesium.
An atomic clock is an atomic frequency standard that is operated continuously 
to provide any user with frequency as well as time\cite{reviewafs}.
Miniature Atomic Clock (MAC) such as Chip Scale Atomic Clock (CSAC) offer a balance of low power consumption, frequency stability, and compactness making them an ideal candidate for integration into various compact portable devices in dense IOT network\cite{kitching}. 
\begin{table*}[!t]
\arrayrulecolor{subsectioncolor}
\setlength{\arrayrulewidth}{1pt}
{\sffamily\bfseries\begin{tabular}{lp{6.75in}}\hline
\rowcolor{abstractbg}\multicolumn{2}{l}{\color{subsectioncolor}{\itshape
Highlights}{\Huge\strut}}\\
\rowcolor{abstractbg}$\bullet$ & We propose an optimal GNSS time tracking control for a set of synchronising MACs that periodically steers the average of all the synchronising MACs towards the average of MACs in the GNSS receiver, thus decoupling mutual synchronisation and GNSS time tracking dynamics.\\
\rowcolor{abstractbg}$\bullet${\large\strut} & Despite unobservability of average of any set of MACs, we show that a summation procedure can be used to procure the measurements required for the estimation process related to tracking control scheme that will enhance both accuracy and precision of the synchronised time over a long time period.\\
\rowcolor{abstractbg}$\bullet${\large\strut} &The proposed tracking algorithm improves on the established results related to pairwise steering of frequency standards and synchronisation of a set of atomic clocks that generate synchronised time with optimum short term performance.\\[2em]\hline
\end{tabular}}
\setlength{\arrayrulewidth}{0.4pt}
\arrayrulecolor{black}
\end{table*}


Some of the commercially prevalent clock time synchronisation protocols, such as the Network Time Protocol (NTP)\cite{davidmills} and IEEE-1588 Precision Time Protocol (PTP)\cite{ptp}, have a hierarchical stratum architecture such that the time server in one stratum works as a time reference to the servers in the lower stratum, but accuracy deteriorates with each stratum \cite{pbanerjee}. 
The server in the topmost stratum is aligned with GNSS time that is kept within nanoseconds of accuracy to national standard time\cite{timekalmanfilter}.
GNSS time is transmitted by GNSS satellites around the world and is received by GNSS receivers \cite{receiver}. While GNSS receivers provide a standard time reference to the local MACs, the GNSS time transmitted to the receivers are sometimes vulnerable to factors such as electromagnetic interferences, troposphere, ionosphere condition \cite{gnssdrawbacks}. Therefore, MACs should function autonomously in the absence of GNSS time references.

One solution is clock synchronisation where an ensemble of MACs generate a common synchronised time that coincides with the time generated by a virtual average clock which is known to have excellent frequency stability over a short-term period.
Such procedures have been proposed in centralised\cite{eem,galleani} and distributed form \cite{chen} to obtain optimal performance whenever the GNSS services are unavailable.
However, the performance of synchronised time of the MACs deteriorates in the long run, so it is necessary to periodically provide GNSS time as reference to enhance the precision of synchronised time. 

In short, the GNSS time tracking scheme designed for an ensemble of MACs should improve synchronised time's
\begin{enumerate}
\item \textbf{\textit{long-term accuracy,}}
\item \textbf{\textit{long-term precision.}}
\end{enumerate}
 
Precision and accuracy represent two key metrics for judging the performance of the generated time.
Frequency stability is a fundamental characteristic of an oscillator that describes how well the same frequency is maintained over a certain period of time\cite{lombardi}.
Allan Variance is a commonly used measure for frequency stability of atomic clocks\cite{allan}. 
Accuracy of the generated time indicates its deviation with respect to the \textit{ideal time}. 

From the perspective of systems and control theory, there are several works related to tracking of a virtual reference time by a single follower atomic clock. For instance, in \cite{lqg}-\cite{lqgtx}, a hydrogen maser clock is steered to UTC(USNO) and a reference signal modeled as pure white frequency noise, respectively. Several established results cover pairwise steering of clocks, for example, a composite clock is generated by first gradually steering a voltage controlled oscillator (VCO) to a hydrogen maser clock which is in turn eventually steered to a cesium clock in \cite{dpllcom}, a hydrogen maser clock is steered to a cesium clock in \cite{dpll, priyanka}, a hydrogen maser clock is steered to a rubidium clock in \cite{pidbp}, an oven controlled quartz oscillator is steered to rubidium clock in \cite{pp}. However, none of them can be directly applied to stabilise and enhance long term performance of a set of synchronising MACs since the established control architecture relies on pairwise clock steering and errors associated with simultaneous synchronisation of numerous MACs can interfere with stabilisation of the mean of all the MACs.

The proposed tracking algorithm demonstrates an effective application of Kalman filter estimation, graph theory, control over networked systems and $H_2$-optimisation.



The rest of the paper is organised as follows. Section \ref{sec:problemformulation} covers some basic concepts related to atomic clocks to formulate the problem statement. The solution for the GNSS time tracking problem is provided in Section \ref{sec:mainresults}. Simulation results for illustration purposes are provided in Section \ref{sec:numericalresults}. The paper ends with concluding remarks in Section \ref{sec:conclusion}.
 
\section{Problem formulation}\label{sec:problemformulation}
\subsection{Notations}\label{notations}
$\mathbb{R}, \mathbb{C}, \mathbb{Z}$ denotes a set of real, complex and integer numbers, respectively.
$\boldsymbol{1}_n$ denotes a $n-$ dimensional vector whose elements are equal to one.
$I_n$ is a $n-$ dimensional identity matrix. 
$I_{n|i}$ denotes the $i^{th}$ row of the identity matrix $I_n$. 
Any diagonal matrix $O$ with elements $o_1, o_2, \ldots, o_n$ is denoted as $O=\text{diag}(o_1,\ldots,o_n)$ or $O=\text{diag}(o_i)_{i\in\{1,\ldots,n\}}$. 
For a set $\mathcal{H}, |\mathcal{H}|$ denotes its cardinality. 
If $H$ is a square matrix, then $\lambda(H)$ denotes the set of eigenvalues of $H$, $\lambda_{\max}(H)$ denotes the maximum eigenvalue in the set $\lambda(H)$.
Ker$(H)$ and Im$(H)$ denotes the kernel and image of matrix $H$, respectively.
Trace of matrix $H$ is denoted by tr$(H)$.
A matrix $N=\begin{bmatrix}N_1^\top&N_2^\top&\cdots&N_n^\top\end{bmatrix}^\top$, where $N_i$ is the $i^{th}$ row of $N$, can be denoted as $N= \text{col} (N_i)_{i\in \{1,\ldots,n\}}.$
For any matrices $C=[c_{ij}]\in \mathbb{R}^{p\times q}, D=[d_{ij}]\in \mathbb{R}^{s\times t}$, the matrix $C\otimes D\in\mathbb{R}^{ps\times qt}$ denotes the Kronecker product of the two matrices. For any vector $d\in \mathbb{R}^m, \|d\|_2\in \mathbb{R}$ represents its Euclidean norm. 
For any random variable $X$, $\mathbb{E}[X]$ denotes the expectation of $X$.

\subsection{Second order model of an atomic clock}\label{modelsingleclock}
In this subsection, we briefly recall the discrete-time second order atomic clock model \cite{higherallanvariance,tutorial}.

Consider a discrete time sequence $t_k=k\tau, k\in \mathbb{Z}, \tau>0,$ where the time period between two adjacent time instants is a constant, $(t_{k}-t_{k-1})=\tau \; \forall \,k\in \mathbb{Z}$. Here, $\tau$ is called the sampling period of the time sequence.
The discrete-time two state deviation model of atomic clocks is given by\cite{tutorial},
\begin{align*}
    \begin{bmatrix}x_1[k+1] \\ x_2[k+1]\end{bmatrix}=\begin{bmatrix}
        1&\tau\\0&1
    \end{bmatrix}\begin{bmatrix}x_1[k] \\ x_2[k]\end{bmatrix}+\begin{bmatrix}\tau\\1\end{bmatrix} u[k]+ v[k],
    \end{align*}
where $x_1[k]$ is the clock reading deviation or phase deviation from ideal time, $x_2[k]$ is the frequency deviation from ideal time, $u[k]$ is the control input to be designed and $v[k]=\begin{bmatrix}
    v_1[k]&v_2[k]
\end{bmatrix}^\top$ is the process Gaussian noise such that 
\begin{align*}
    \mathbb{E}[v[k]]=0,\quad \mathbb{E}[v[k] v^\top[k]]=Q_s,
\end{align*}
where $Q_s$ is the covariance matrix given by
\begin{align}
    Q_s=\begin{bmatrix}
      \tau\sigma_1^2+ \frac{\tau^3}{3}\sigma_2^2& \frac{\tau^2}{2} \sigma_2^2\\\frac{\tau^2}{2} \sigma_2^2& \tau \sigma_2^2
    \end{bmatrix},\label{Qs}
\end{align}
where $\sigma_1^2$ is the variance of white frequency noise and $\sigma_2^2$ is the variance of random-walk frequency noise.

For brevity, we will write the above discrete stochastic difference equation as 
\begin{subequations}
\begin{align}
& x[k+1]=A x[k]+ B u[k]+ v[k],\label{icmodel}\\
& d[k]=C x[k],\label{icmea}  
\end{align}
\end{subequations}

where $x[k]=\begin{bmatrix}x_1[k]&x_2[k]\end{bmatrix}^\top,$ $A\triangleq \begin{bmatrix}
        1&\tau\\0&1
    \end{bmatrix}, B\triangleq \begin{bmatrix}
        \tau\\1
    \end{bmatrix}, C\triangleq\begin{bmatrix}1 &0\end{bmatrix}$.
    
In the following subsection, we will discuss the networked system architecture that is primarily composed of an ensemble of several MACs, each of which can be modeled as per \eqref{icmodel}.


\subsection{Networked System Setup}\label{networkedsystemsetup} We consider an ensemble consisting of $n$ second-order MACs that are interacting as per an undirected network represented by graph $\mathcal{G}=(\mathcal{N},\mathcal{E})$ where $\mathcal{N}=\{1,\ldots,n\}$ is the index set of all clocks in the network and $\mathcal{E}\subseteq \mathcal{N}\times \mathcal{N}$ is the corresponding edge set consisting of all the edges in the network, as depicted in the light blue box of Figure \ref{fig:syn}- \ref{fig:broad}.
If $i^{th}$ clock receives any information from $j^{th}$ clock $\forall\; i\neq j$, then $(j,i)\in \mathcal{E}$. Furthermore, the $j^{th}$ clock is called a neighbour of the $i^{th}$ clock and $j\in \mathcal{N}_i$ where $\mathcal{N}_i$ represents the set of neighbours of the $i^{th}$ clock. Based on interconnectivity of the clocks,  we can construct an adjacency matrix $\mathcal{A}\triangleq [a_{ij}]\in \mathbb{R}^{n\times n}$ such that 
\begin{equation*}
   a_{ij} = \begin{cases}
         1, \;\text{if}\; j\in \mathcal{N}_i,\\
        0, \; \text{otherwise}.
    \end{cases}
\end{equation*}
For any undirected graph, the off-diagonal elements of the adjacency matrix satisfy $a_{ij}=a_{ji}.$
The degree matrix is $\mathcal{D}\triangleq \text{diag}(\sum_{j=1}^n a_{1j},\sum_{j=1}^n a_{2j},\ldots, \sum_{j=1}^n a_{nj})$.
Then, the Laplacian matrix can be defined as $\mathbb{R}^{n\times n}\ni\mathcal{L}\triangleq \mathcal{D}-\mathcal{A}.$

Relative clock reading between any two neighbouring clocks is measured using the Dual Mixture Time Difference (DMTD) technique\cite{pbanerjee}.
Due to physical restrictions, the absolute clock reading of any individual clock is not accessible. 
The relative clock reading measurement is called an edge measurement, and this is accessed by the ensemble MACs according to the underlying connection graph. 
We introduce the concept of the edge matrix of the $i^{th}$ MAC which is defined as $V_i\triangleq \text{col}(I_{n|j}-I_{n|i})_{j\in \mathcal{N}_i}\in \mathbb{R}^{|\mathcal{N}_i|\times n}$. 
Stacking the edge matrices of all $n$ MACs one on top of the other gives the edge matrix of the interconnected network as $V\triangleq \text{col}(V_i)_{i\in\mathcal{N}}\in\mathbb{R}^{|\mathcal{E}|\times n}$. For example, the edge matrix for the MACs in Fig. \ref{fig:example} is 
\begin{align*}
    V=\begin{bmatrix}
        V_1\\V_2\\V_3
    \end{bmatrix}=\begin{bmatrix}
        -1&1&0\\1&-1&0\\  0&-1&1\\ 0&1&-1  
    \end{bmatrix}.
\end{align*}

Secondly, there are $g$ dispersed GNSS receivers that have their own built-in MACs and are steered to GNSS time based on the signals received from GNSS satellites.
Naturally, it is assumed that $g < n$ and each receiver is adjacent to one of the ensemble MACs. 
Let $\mathcal{W}$ denote the subset of ensemble MACs that are adjacent to the receivers. 
Once again, the relative clock readings or edge measurements between the GNSS receivers and the adjacent clock are available by means of the DMTD technique.
\begin{figure}
\centering
\begin{subfigure}[b]{0.5\textwidth}
   \includegraphics[width=1\linewidth]{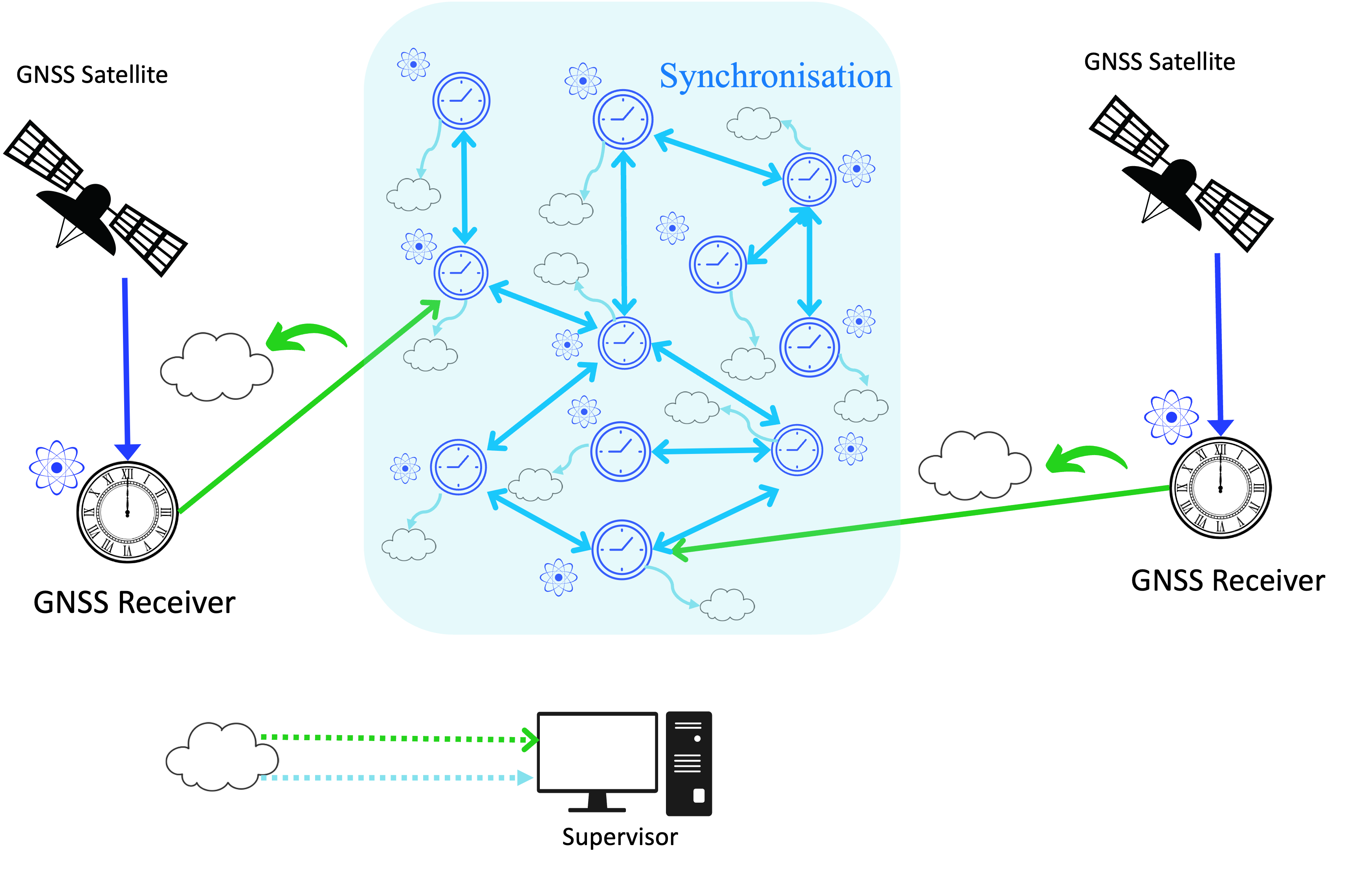}
   \caption{Persistent synchronisation of all MACs to the average of the ensemble.}
   \label{fig:syn} 
\end{subfigure}
\begin{subfigure}[b]{0.5\textwidth}
   \includegraphics[width=1\linewidth]{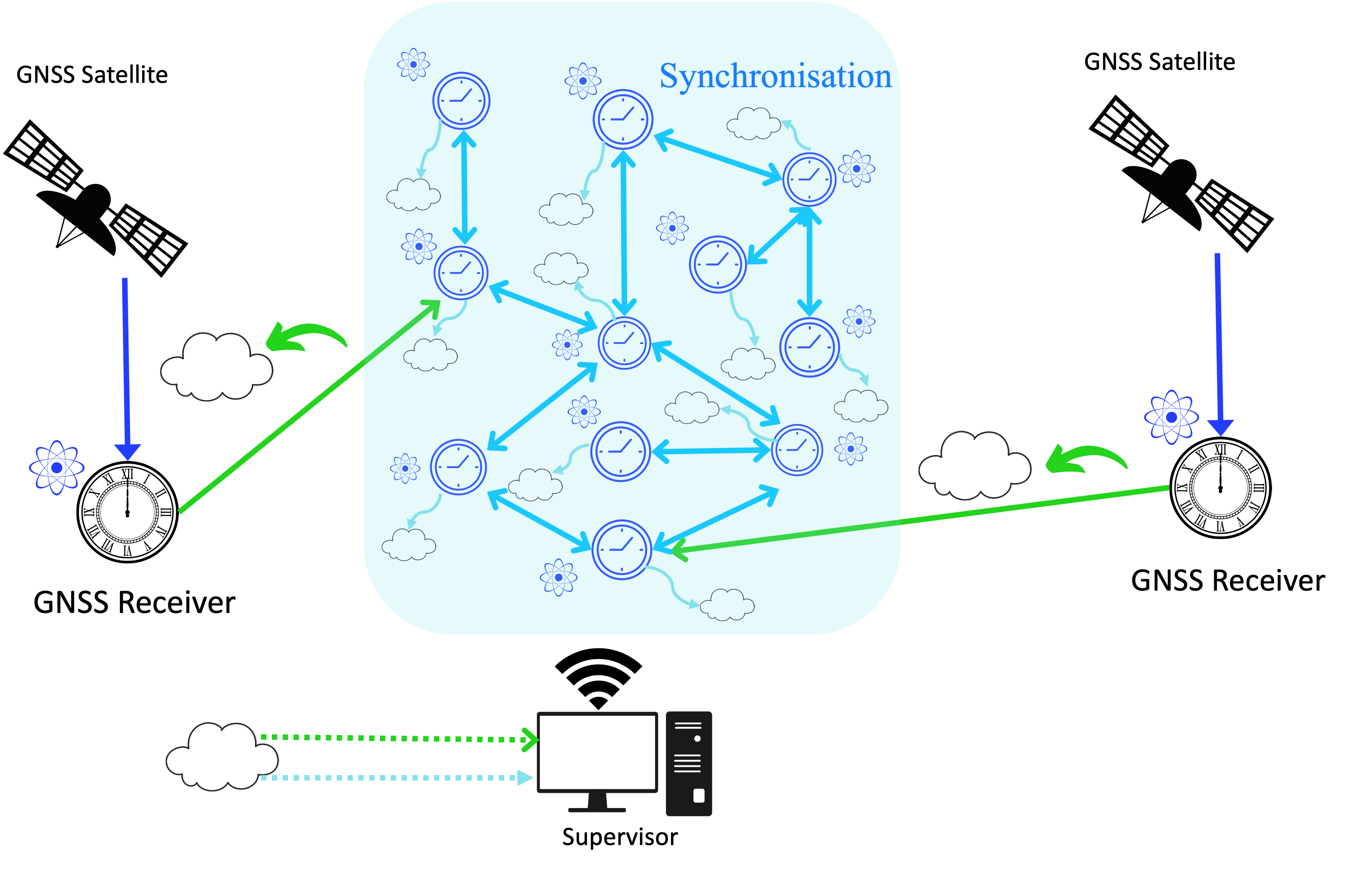}
   \caption{Periodic broadcasting of GNSS time tracking information by the supervisor to all the ensemble MACs.}
   \label{fig:broad}
\end{subfigure}
\caption[]{System setup for an ensemble consisting of MACs, GNSS receivers and supervisor.}
\label{fig:network}
\end{figure}
\begin{figure}
    \centering
    \includegraphics[width=1\linewidth]{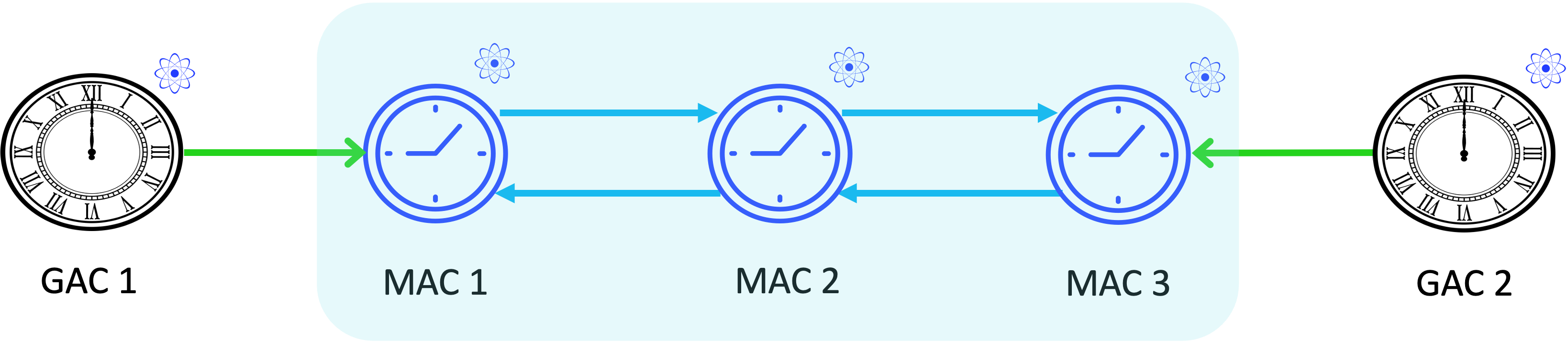}
    \caption{An example: Communication network of three MACs and two GACs}
    \label{fig:example}
\end{figure}
The corresponding edge matrix for such connections is given by
\begin{equation*}
    V_G=\begin{bmatrix}
      \text{col}(-I_{n|i})_{i\in \mathcal{W}}& I_{g}  
    \end{bmatrix}\in \mathbb{R}^{g\times (n+g)},
\end{equation*} where its $i^{th}$ row is $V_{G_i}=\begin{bmatrix}-I_{n|i}&I_{g|j}\end{bmatrix}$ such that $i^{th}$ MAC and the $j^{th}$ GNSS receiver are adjacent to each other. 
For example, in Fig. \ref{fig:example}, two GNSS receiver atomic clocks (GACs) are adjacent to MAC 1 and MAC 3. 
The edge matrix $V_G$ in this case is
\begin{align*}
    V_G=\begin{bNiceArray}{c c c : c c}
        -1 &0&0&1 &0\\0&0&-1&0&1
    \end{bNiceArray}.
\end{align*}

Finally, there exists a single supervisor that collects edge measurements between the GNSS receivers and their adjacent ensemble clocks, and all local estimates of relative phase and frequency deviation between neighbouring ensemble clocks to calculate the GNSS time tracking control which is to be broadcasted periodicially over a long period to the ensemble of $n$ MACs as shown in Figure \ref{fig:broad}.
The network configuration detailed above is shown in Figure \ref{fig:network}.

In the following subsection, we will cover the compact stochastic state-space model of $n$ cesium MACs and $g$ GNSS receiver MACs based on the second order model for a single atomic clock introduced in Subsection \ref{modelsingleclock}.

\subsection{Compact deviation model for a set of atomic clocks}
\subsubsection{Compact model for an ensemble of miniature atomic clocks}\label{compactcluster}

In the ensemble of $n$ MACs, any $i^{th}$ atomic clock model is $x_i[k+1]=A x_i[k]+ B u_i[k]+ v_i[k]$, where $i\in \mathcal{N}$.
Then, stacking all the $n$ models gives the compact form,
    \begin{align}
    x[k+1]= (I_n\otimes A) x[k]+(I_n\otimes B) u[k]+ v[k],\label{compactx}
\end{align}
where $x[k]\triangleq \begin{bmatrix}x_1^\top[k],\ldots,x_n^\top[k]\end{bmatrix}^\top\in \mathbb{R}^{2n},$ $u[k]\triangleq \begin{bmatrix}
    u_1[k],\ldots,u_n[k]
\end{bmatrix}^\top\in \mathbb{R}^n,$ $v[k]\triangleq \begin{bmatrix}v_1^\top[k],\ldots,v_n^\top[k]\end{bmatrix}^\top\in \mathbb{R}^{2n}.$
The compact system noise $v[k]$ is a stationary Gaussian process such that 
\begin{align*}
    \mathbb{E}[v[k]]=0, \; \mathbb{E}[v[k] v^\top[k]]=Q=\text{diag}(Q_i)_{i\in\mathcal{N}},
\end{align*}
where $Q_i$ is the covariance matrix akin to \eqref{Qs} and is parametrised by standard deviation $\sigma_{1i}$ and $\sigma_{2i}$ of clock $i$. 
Based on the edge matrix $V$ for the ensemble MACs introduced in Subsection \ref{networkedsystemsetup}, we introduce a variable called edge state $\xi[k]=(V\otimes I_2) x[k]$ which, represents the relative phase and frequency deviation between all neighbouring MACs in the network.
Then, $\xi_i[k]=(V_i\otimes I_2) x[k]= \text{col}( \xi_{ij}[k])_{j\in\mathcal{N}_i}$ is the edge state associated with the $i^{th}$ clock, where $\xi_{ij}[k]$ is the phase and frequency deviation of the $j^{th}$ clock with respect to the $i^{th}$ clock  and $\xi[k]=\text{col}(\xi_{i}[k])_{i\in \mathcal{N}}=\text{col}(\text{col}(\xi_{ij}[k])_{j \in \mathcal{N}_i})_{i\in \mathcal{N}}.$ 
Now, the $i^{th}$ clock in the ensemble has access to edge measurement $y_i[k]=(V_i\otimes C) x[k]+ w_i[k],$ which can be rewritten in terms of edge state $\xi_i[k]$ as $(I_{|\mathcal{N}_i|}\otimes C) \xi_i[k]+ w_i[k].$ 
Here, $w_i[k]$ is the measurement noise such that $\mathbb{E}[w_i[k]]=0, \mathbb{E}[w_i[k] w_i^\top[k]]=R_i\in \mathbb{R}^{|\mathcal{N}_i|\times |\mathcal{N}_i|}$.
The $i^{th}$ clock uses $y_i[k]$ to estimate edge state $\xi_i[k]$.

Now, all the edge measurements in the network can be stacked to get the compact edge measurement, 
\begin{align}
    y[k] =(I_{|\mathcal{E}|}\otimes C)\xi[k]+ w[k],
\end{align}
where $y[k]\triangleq \text{col}(y_i[k])_{i\in \mathcal{N}}$ and $w[k] \triangleq \text{col}(w_i[k])_{i\in \mathcal{N}}$ satisfies $\mathbb{E}[w[k]]=0, \mathbb{E}[w[k] w^\top[k]]= R= \text{diag}(R_i)_{i\in \mathcal{N}}\in \mathbb{R}^{|\mathcal{E}|\times |\mathcal{E}|}.$  

\subsubsection{Two state model for clocks in the GNSS receivers}
The MACs embedded in the GNSS receivers are synced to GNSS time before hand. 
However, because of broadcasted ephemeris inaccuracy, the receiver clocks display a constant offset in the range of the nanoscale with respect to the GNSS time.

So, the compact model of the clocks embedded in the GNSS receivers is given by
\begin{align}
    X[k+1]=(I_g\otimes A) X[k]+ v_G[k],\label{compactX}
\end{align}
where $X[k]=\begin{bmatrix}X_1^\top[k],\ldots,X_g^\top[k]\end{bmatrix}^\top$ such that $\mathbb{E}[X_i[0]]=\begin{bmatrix}
    \Theta_i\\0
\end{bmatrix}, \Theta_i\in \mathbb{R}$ and $v_G[k]$ is the system noise such that 
\begin{align}
 \mathbb{E}[v_G[k]]=0,\quad \mathbb{E}[v_G[k]v_G^\top[k]]=Q_G=\text{diag}(Q_{G_i})_{i\in \{1,\ldots,g\}},   \nonumber
\end{align}
where $Q_{G_i}$ is the covariance matrix akin to \eqref{Qs} and is parametrised by standard deviation $\sigma_{G_{1i}}$ and $\sigma_{G_{2i}}$. 
Since all the receiver clocks follows the GNSS time, despite some small offset, the system noise standard deviation $\sigma_{G_{1i}}, \sigma_{G_{2i}}$ is significantly lower than the system noise standard deviation $\sigma_{1i}, \sigma_{2i}$ associated with the ensemble MACs.
Based on the edge matrix $V_G$ introduced in Subsection \ref{networkedsystemsetup}, we can now describe its associated edge state $\xi_G[k]=(V_G\otimes I_2)\begin{bmatrix}x[k]\\X[k]\end{bmatrix}$, which describes the relative phase and frequency deviation between the GNSS receiver clocks and their adjacent ensemble clocks.
Then the related output measurement process $Y[k]=(V_G\otimes C) \begin{bmatrix}x[k]\\X[k]\end{bmatrix}+ w_G[k]$ can be written in terms of edge state as 
$(I_g\otimes C)\xi_G[k] + w_G[k].$ Here, $w_G[k]$ is the measurement noise that satisfies $\mathbb{E}[w_G[k]]=0, \mathbb{E}[w_G[k] w_G^\top[k]]=R_G \in \mathbb{R}^{g\times g}.$
The supervisor accesses the edge measurement $Y[k]$ to estimate a variable called \textit{tracking error} which will be described in the upcoming subsections.
Due to availability of various noisy edge measurements, Kalman filter can be employed to estimate the associated edge state.

\subsection{Frequency stability: Allan variance(AVAR)}\label{sec:avar}
Let us consider the individual second-order clock model as per \eqref{icmodel}-\eqref{icmea} when it is unforced, \textit{i.e.} $u[k]=0$. 
Then, its AVAR\cite{allan} is
\begin{align}
    \sigma_A^2(\tau) = \mathbb{E}\bigg[\frac{(\Delta_1^2 d[k])^2}{2\tau^2}\bigg],
\end{align}
where $\Delta_1^2 d[k]$ is the second-order difference of $d[k]$. 
In fact, the AVAR of an unforced or free running atomic clock turns out to be a function of $\tau$ only and is expressed as \cite{zucca}
\begin{align}
    \sigma_A^2(\tau) = \frac{1}{\tau} \sigma_1^2+\frac{\tau}{3}\sigma_2^2.
\end{align}
However, when the clocks are controlled, \textit{i.e.} $u[k]\neq 0$, AVAR is computed statistically based on output sequence $\{d[0], d[1], \ldots, d[T]\}$ which is measured over the sampling period $\tau$.
So, the statistical estimation of AVAR is given by \cite{stav}
\begin{align}
    \sigma_A^s(w;\{d[k]\})=\frac{1}{(T-2w)} \sum_{k=0}^{(T-2w-1)} \frac{(\Delta_w^2 d[k])^2}{2(w\tau)^2},
\end{align}
where $w$ is an interval. 
From the above expressions, it can intuitively be concluded that a relatively low AVAR indicates higher frequency stability in the timescale generated by the clock.  
As we are primarily dealing with a set of MACs, let us check the AVAR of the virtual clock which is the ensemble mean or average of a set of clocks. 
Consider the notation $\Gamma(\tau)\triangleq \tau \Sigma_1+\frac{\tau^3}{3}\Sigma_2$ where $\Sigma_1=\text{diag}(\sigma_{1i}^2)_{i\in\mathcal{N}}$ and $\Sigma_2=\text{diag}(\sigma_{2i}^2)_{i\in\mathcal{N}}$.
Then, the AVAR of the ensemble mean clock is given by\cite{eem}
\begin{align}
    \sigma_A^2(\tau)= \frac{1}{\tau^2}q^\top \Gamma(\tau) q,\label{ensembleavar}
\end{align}
where $q=(\boldsymbol{1}_n^\top\boldsymbol{1}_n)^{-1}\boldsymbol{1}_n=\frac{1}{n}\boldsymbol{1}_n,$ so it can be seen that $q^\top\boldsymbol{1}_n=1.$ 
\begin{table}[t!]
    \centering
    \begin{tabular}{c|c|c|c|c}
    &Scale& $i=1$ & $i=2$& $i=3$\\
    \hline &&&&\\
        $\sigma_{1i}^2$&$10^{-18}$ &0.0289&$7.84996\times 10^{-3}$ & $0.0149$ \\&&&&\\ 
    \hline &&&&\\
        $\sigma_{2i}^2$& $10^{-24}$ & $0.0227$ & $2.83\times 10^{-3}$ & $2.7889\times 10^{-4}$\\
        &&&&
    \end{tabular}
    \caption{Variances of the noises associated with three cesium clocks}
    \label{tab:variances of the noises}
\end{table}
\begin{figure}[t!]
    \centering
    \includegraphics[width=0.7\linewidth]{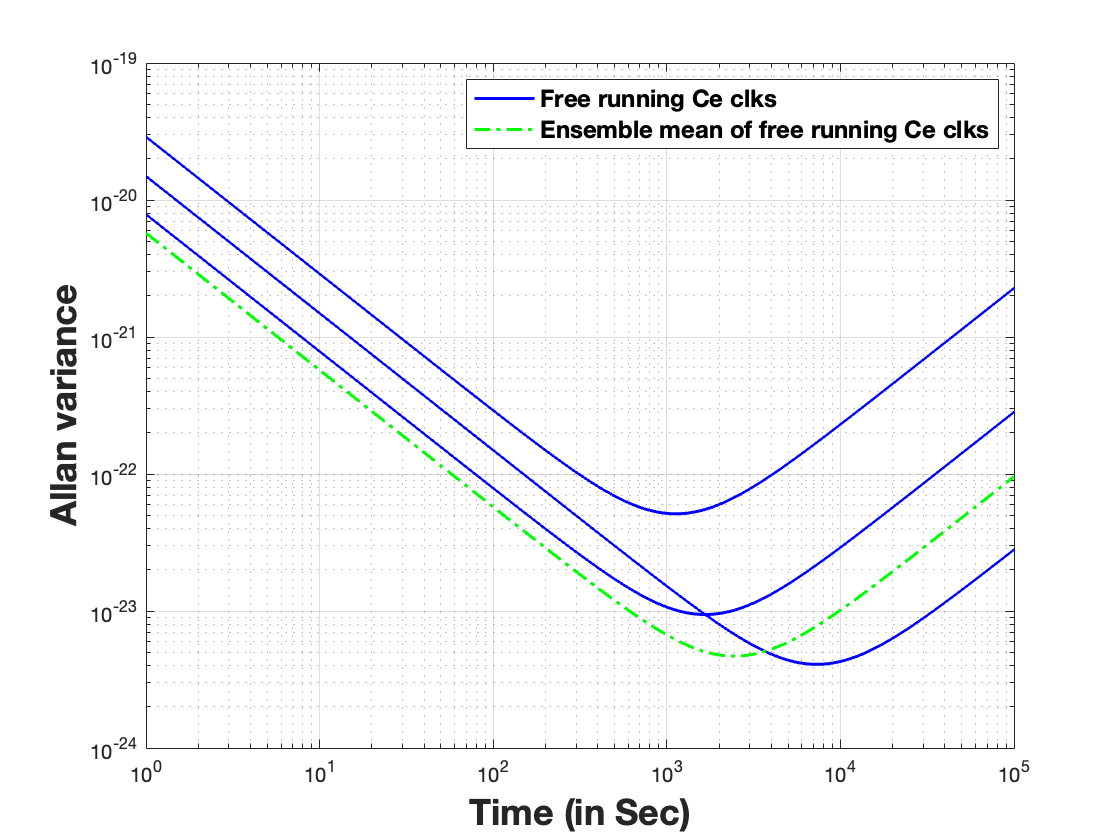}
    \caption{AVAR of three free running clocks and average of these three clocks.}
    \label{fig:avarfr}
\end{figure}
The average of free running cesium clocks has superior short-term performance compared to any of the individual clocks. 
This is because the AVAR of ensemble mean clock is lower than AVAR of any of the individual free running clocks. 
For illustration purposes, we have shown the AVAR of three free running cesium clocks whose white frequency noise and random-walk frequency noise variances are listed in Table \ref{tab:variances of the noises}. 
From Figure \ref{fig:avarfr}, it can be seen that AVAR of the ensemble mean (green plot) in short term is lower than AVAR of any of the individual clock (blue plots). 
However, the performance of ensemble mean AVAR deteriorates after 2000 Sec.

\subsection{Distributed synchronisation among ensemble clocks}
The $n-$ MACs estimate their incoming edges based on the noisy edge measurements available to them according to the underlying connected graph $\mathcal{G}$ to ensure distributivity in the control loop. 
The synchronising  control that is applied to any $i^{th}, i\in \mathcal{N}$ ensemble clock is\cite{chen},
\begin{align}
    u_i[k]=\frac{1}{2} F\sum_{j\in \mathcal{N}_i} (\hat{\xi}_{ij}^-[k]-\hat{\xi}_{ji}^-[k]), \label{usyni}
\end{align}
where $F\in \mathbb{R}^{1\times 2}$ is selected to satisfy the condition
\begin{align}
    \lambda(\Pi\otimes A&-\mathcal{L}\otimes BF)\in \mathbb{D},\label{fcondition}
\end{align}
where $\Pi=(I_n-\boldsymbol{1}_n q^\top)$ and $\mathbb{D}$ is the unit circle in the complex plane such that $\mathbb{D}\triangleq \{y\in\mathbb{C}: |y|<1\}$. In \eqref{usyni},
$\;\hat{\xi}_{ij}^-[k],\hat{\xi}_{ji}^-[k]$ are Kalman filter estimates of $\xi_{ij}[k]$ and $\xi_{ji}[k]$ based on measurements $y_i[k]$ and $y_j[k],$ respectively. The local Kalman filter gain used for estimation in any clock $i$ is given by $H_i^*= A_i P_iC_i^\top (C_iP_iC_i^\top +R_i)^{-1}$, $P_i$ being solution of
\begin{align}
    P_i=A_i P_i A_i^\top- A_i P_iC_i^\top (C_iP_iC_i^\top+R_i)^{-1} C_i P_iA_i^\top+\bar{Q}_i\nonumber
\end{align}
where $A_i=(I_{|\mathcal{N}_i|}\otimes A), C_i=(I_{|\mathcal{N}_i|}\otimes C), \bar{Q}_i= (V_i\otimes I_2)Q(V_i\otimes I_2)^\top$. 
As we have assumed that the underlying graph $\mathcal{G}$ is undirected, any $i^{th}$ clock can easily access the estimate $\sum_{j\in \mathcal{N}_i}\hat{\xi}_{ji}^-[k]$ without sacrificing distributivity in the control loop. It can be verified that on application of \eqref{usyni}, in steady state, each of the MACs satisfies $\mathbb{E}[x_i[k]]=\mathbb{E}[\bar{z}[k]],\;\forall\, i\in \mathcal{N}$ where $\bar{z}[k]= (q^\top\otimes I_2)x[k].$
So, due to continuous application of distributed synchronisation, all the ensemble MACs simultaneously converge to the virtual average clock that is the average of all MACs in the network which is known to have the best short-term performance as seen in Subsection \ref{sec:avar}.

Now, to improve the long-term accuracy and precision of the common synchronised time generated by distributedly synchronised MACs, a $H_2$-optimal GNSS time tracking control should be designed that can periodically regulate the ensemble mean $\bar{z}[k]$ by steering it towards the GNSS receiver clocks.
\subsection{Problem statement}\label{sec:problemstatement}
The primary objective of this paper is to construct a GNSS time tracking control for the ensemble MACs  that will improve the two performance metrics of the generated synchronised time as mentioned in Section \ref{sec:introduction}.
For the given networked system setup as shown in Figure \ref{fig:network}, improvement in the performance can be expressed mathematically as follows.

\textbf{Tracking of standard time} The aim is to steer the synchronised time towards GNSS time. This can be expressed mathematically as follows.
\begin{align}
    \underset{k\to \infty}{\lim}\mathbb{E}\Bigg[\bigg{\|}\bar{z}[k]-\frac{1}{g}\sum_{j=1}^gX_j[k]\bigg{\|}_2\Bigg]=0.\label{trackopti}
\end{align}
\begin{rem}
    The desired mathematical objective given in \eqref{trackopti} can be expanded and interpreted as follows.
    \begin{subequations}\label{trackobjectives}
\begin{equation}
  \underset{k\to \infty}{\lim}\mathbb{E}[\bar{z}[k]]=\underset{k\to \infty}{\lim}\mathbb{E}\bigg[\frac{1}{g}\sum_{j=1}^gX_j[k]\bigg]
\end{equation}
\begin{equation}
\underset{k\to \infty}{\lim}\mathbb{E}\Bigg[\bigg(\bar{z}[k]-\frac{1}{g}\sum_{j=1}^gX_j[k]\bigg)\bigg(\bar{z}[k]-\frac{1}{g}\sum_{j=1}^gX_j[k]\bigg)^\top\Bigg] \\=c,
\end{equation}
\end{subequations}
where $c$ is a finite value.
\end{rem}



In the following section, we will expand on GNSS time tracking procedure which will stabilise $\bar{z}[k]$ according to the above mathematical statements despite unobservability of $\bar{z}[k]$.

\section{Optimal GNSS time tracking control}\label{sec:mainresults}
The supervisor calculates the GNSS time tracking control and broadcasts it to all the ensemble MACs periodically over long period of time. 
For further development, we will use the basis vector $q_G=(\boldsymbol{1}_g^\top\boldsymbol{1}_g)^{-1} \boldsymbol{1}_g=\frac{1}{g}\boldsymbol{1}_g$ to represent the average vector of the MACs embedded in the GNSS receivers. 
So, the average of GNSS receiver clocks is denoted by $ \bar{Z}[k]\triangleq (q_G^\top\otimes I_2) X[k]$.
The primary objective is to minimise the deviation between $\bar{z}[k]$ and $\bar{Z}[k]$. 
The difference between the average subsystems is called tracking error, which is denoted by $\tilde{z}[k]\triangleq \bar{z}[k]-\bar{Z}[k]$, whose dynamics is
\begin{align}
  \tilde{z}[k+1]
  =& A \tilde{z}[k] + (q^\top\otimes B) u[k]+ (q^\top\otimes I_2) v[k]\nonumber\\&- (q_G^\top\otimes I_2) v_G[k].\label{ztilde}
\end{align}
We will use the shorthand notations $\bar{v}[k]\triangleq(q^\top\otimes I_2) v[k]$ and $\bar{v}_G[k]=(q_G^\top\otimes I_2) v_G[k]$ hereafter.
As apparent from the above equation, to influence the destination of synchronised clocks towards GNSS time, we need information about $\tilde{z}[k]$ or both $\bar{z}[k]$ and $\bar{Z}[k]$, which are not available in the measurement process, rather the edge states are.
The inaccessibility of tracking error or averages of the two subsystems can make the tracking procedure challenging, but in the following section, we will apply a simple summation procedure to the accessible measurement process to remove this encumbrance.
\subsection{Estimation of the tracking error}
Recall the output edge measurement process between the GNSS receiver and their adjacent clocks in the ensemble, this is accessed by the supervisor,
\begin{align}
     Y[k]
     = (I_g\otimes C) \xi_G[k]+ w_G[k].\label{Yk}
\end{align}

For further derivation, we will utilise a vector $q_A$ which will indicate adjacency of the ensemble clocks to the GNSS receiver and its elements are defined as,
\begin{equation}\label{qA}
    q_A(i)=\begin{cases}
        1,\, \text{if} \;i\in\mathcal{W},\\
        0, \text{otherwise}.
    \end{cases}
\end{equation}
The supervisor applies a summation operation to $Y[k]$ which is denoted by $\bar{Y}[k]\triangleq \boldsymbol{1}_g^\top Y[k]$. Using \eqref{Yk} and \eqref{qA}, $\bar{Y}[k]$ can be expanded and rewritten in terms of $\tilde{z}[k]$ as shown below,
\begin{align*}
   \bar{Y}[k]
   =-g\,C\,\tilde{z} [k] +((g \,q^\top -q_A^\top)\otimes C) x[k] +\boldsymbol{1}_g^\top w_G[k].
\end{align*}
As $ \text{Ker}(g\,q^\top-q_A^\top)=\text{Im}(\boldsymbol{1}_n),$ so we can rewrite $\bar{Y}[k]$ as 
\begin{align}
    \bar{Y}[k]
    =-g C \tilde{z} [k] - (q_A^\top\otimes C) z[k] +\boldsymbol{1}_g^\top w_G[k],\label{Ybar}
\end{align}
where $z[k]=(V^\dagger V\otimes I_2) x[k]=((I_n-\boldsymbol{1}_n q^\top)\otimes I_2) x[k]$ is the synchronisation error and $V^\dagger$ is the generalised inverse of $V$ such that $\text{Ker}(q^\top)= \text{Im}(V^\dagger)$.
Now, the coagulated edge measurement \eqref{Ybar} is a function of both tracking error $\tilde{z}[k]$ and synchronisation error $z[k]$. 
Even though, we are in a position to apply the standard Kalman estimation algorithm to estimate $\tilde{z}[k]$ based on measurement $\bar{Y}[k]$, the by-product term $(q_A^\top\otimes C) z[k] $ should be compensated. 
The estimation process used by the supervisor is summarised as 
\begin{align}
     \hat{\tilde{z}}^-[k+1]=&A\hat{\tilde{z}}^-[k]+ A H_{\tilde{z}}^* (\bar{Y}[k]+  gC\hat{\tilde{z}}^-[k]\nonumber\\+& (q_A^\top V^\dagger\otimes C) \hat{\xi}^-[k])+ (q^\top\otimes B) u[k],\label{ztildehat-}
\end{align}
where $ H_{\tilde{z}}^*$ is the Kalman gain obtained from
\begin{subequations}\label{pztilde}
    \begin{align}
      P_{\tilde{z}}^-=& AP_{\tilde{z}}^- A^\top+\bar{Q}_G\nonumber\\- A P_{\tilde{z}}^-&(-gC)^\top (g^2 C P_{\tilde{z}}^- C^\top +\bar{R}_G)^{-1}(-gC)P_{\tilde{z}}^- A^\top\\
       H_{\tilde{z}}^* = &P_{\tilde{z}}^- (-gC)^\top (g^2 C P_{\tilde{z}}^- C^\top +\bar{R}_G)^{-1},
    \end{align}
\end{subequations}
where $\bar{R}_G=(\boldsymbol{1}_g^\top\otimes I_2)R_G(\boldsymbol{1}_g^\top\otimes I_2)^\top$ and \\$\bar{Q}_G=
   (q^\top\otimes I_2)Q(q^\top\otimes I_2)^\top+(q_G^\top\otimes I_2) Q_G (q_G^\top \otimes I_2)^\top$.
In \eqref{ztildehat-}, the supervisor compensates the synchronisation error based by-product term that arises from the coagulated measurement $\bar{Y}[k]$ by utilising edge state estimates $\hat{\xi}^-[k]=\text{col}(\hat{\xi}_i^-[k])_{i\in \mathcal{N}}=\text{col}(\text{col}(\hat{\xi}_{ij}^-[k])_{j \in \mathcal{N}_i})_{i\in \mathcal{N}},$ of the MACs in the ensemble, which are collected as shown in Figure \ref{fig:broad}.

\subsection{Tracking control broadcasted by the supervisor}\label{trackingcontrol}
At the time of broadcast, the control input transmitted by the supervisor to the ensemble can be expressed compactly as 
\begin{align}
    u_{G}[k]= -(\textbf{1}_n\otimes F_B)  \hat{\tilde{z}}^-[k], \label{ug}
\end{align}
where $F_B$ is the feedback matrix assigned for effective implementation of tracking control. Let us denote the broadcasting period as $s$. So the overall control input that is applied to the MACs is
\begin{align}
    &u[k]=\begin{cases}
          u_{syn}[k]+u_G[k]\hspace{-0.35cm}&,\text{if} \, k+1 = ls, l\in \mathbb{Z}^+, \\
          u_{syn}[k]\hspace{-0.35cm}&,\text{otherwise},
    \end{cases}\label{ufinal}
\end{align}
where $u_{syn}[k]=\frac{1}{2} \big(\text{diag}(\boldsymbol{1}^\top_{|\mathcal{N}_i|})_{i\in\mathcal{N}}\otimes F\big)(\hat{\xi}^-[k]-\hat{\xi}_r^-[k])$ is the compact form of \eqref{usyni} and $\hat{\xi}_r^-[k]=\text{col}(\text{col}(\hat{\xi}_{ji}^-[k])_{j \in \mathcal{N}_i})_{i\in \mathcal{N}}$.

To conform with the control objective \eqref{trackopti}, the feedback matrix $F_B=\begin{bmatrix}f_{b1}& f_{b2}\end{bmatrix}$ is obtained by solving the following $H_2-$optimisation problem,
\begin{align}
        \underset{f_{b1}, f_{b2}}{\min}\sqrt{\mathbb{E}[\tilde{z}^\top[\infty]\tilde{z}[\infty]]} = \underset{f_{b1}, f_{b2}}{\min}\sqrt{\text{tr}(\tilde{C} \tilde{P} \tilde{C}^\top)}\label{fbopti}
    \end{align}
    subject to
$\frac{4}{(f_{b1}\tau s+2f_{b2})}>1,$ where $\tilde{P}$ solves $\tilde{A} \tilde{A}_0^{(s-1)} \tilde{P} \big(\tilde{A} \tilde{A}_0^{(s-1)}\big)^\top - \tilde{P}+\tilde{Q} =0,$
    \begin{align*}
\tilde{A}&=\begin{bmatrix}
        (A-BF_B)& BF_B &0\\
        0& A(I_2-H_{\tilde{z}}^* (-gC))&\tilde{a}_{23}\\
        0&0&\tilde{a}_{33}
\end{bmatrix},\nonumber\\
 \tilde{A}_0&=\begin{bmatrix}
        A& 0 &0\\
        0& A(I_2-H_{\tilde{z}}^* (-gC))&\tilde{a}_{23}\\
        0&0&\tilde{a}_{33}
\end{bmatrix},\nonumber\\
\tilde{a}_{23}&=A H_{\tilde{z}}^* (q_A^\top V^\dagger\otimes C),\nonumber\\
\tilde{a}_{33}&=(I_{|\mathcal{E}|}\otimes A) (I_{2|\mathcal{E}|}- H^*(I_{|\mathcal{E}|}\otimes C)),
\end{align*}
$H^*=\text{diag}(H_i^*)_{i\in \mathcal{N}}$, $\tilde{C}=\begin{bmatrix}I_2 &0_{2\times 2}&0_{2\times 2|\mathcal{E}|}\end{bmatrix}$, and $\tilde{Q}$ is a positive definite symmetric matrix.

The effect of periodically applied tracking control in the synchronised time is summarised in the following theorem.
\begin{thm}\label{thmfb}
    Consider a set of $n$ MACs \eqref{compactx} communicating by an undirected connected graph $\mathcal{G}=(\mathcal{N},\mathcal{E})$ and, synchronising as per \eqref{usyni} based on local edge state estimates.
    Then, $\underset{k\to \infty}{\lim} \mathbb{E}[\tilde{z}[k]]=0, \underset{k\to \infty}{\lim} \mathbb{E}[\tilde{z}[k]\tilde{z}^\top[k]]=C$, where $C$ is a constant matrix, is satisfied if and only if the supervisor constructs and periodically broadcasts the tracking control to the synchronising MACs as per \eqref{ug} based on estimation process \eqref{ztildehat-}-\eqref{pztilde} and optimisation problem in \eqref{fbopti}.
\end{thm}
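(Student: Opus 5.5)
Our plan is to build one augmented error vector and reduce both claims to the stability of a single periodic linear recursion. We stack the tracking error, the estimation error $e[k]\triangleq\tilde z[k]-\hat{\tilde z}^-[k]$ of \eqref{ztildehat-}, and the edge-state estimation error $\epsilon[k]\triangleq\xi[k]-\hat\xi^-[k]$ of the local Kalman filters into $\eta[k]=\operatorname{col}(\tilde z[k],e[k],\epsilon[k])$.

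\textbf{Step 1: decoupling.} We first show that synchronisation does not enter the mean dynamics. Since $\mathcal{G}$ is undirected, $\sum_{i}\sum_{j\in\mathcal{N}_i}(\hat\xi_{ij}^--\hat\xi_{ji}^-)=0$, so $(q^\top\otimes B)u_{syn}[k]=0$. Moreover $(q^\top\otimes B)u_G[k]=-BF_B\hat{\tilde z}^-[k]$ because $q^\top\boldsymbol 1_n=1$. Substituting into \eqref{ztilde} gives
$$\tilde z[k+1]=A\tilde z[k]-\delta_k BF_B(\tilde z[k]-e[k])+\bar v[k]-\bar v_G[k],$$
where $\delta_k=1$ if $k+1\in s\mathbb Z^+$ and $\delta_k=0$ otherwise.

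\textbf{Step 2: error recursions.} Subtracting \eqref{ztildehat-} from \eqref{ztilde} and using \eqref{Ybar}, together with $z=(V^\dagger V\otimes I_2)x=(V^\dagger\otimes I_2)\xi$, we obtain
$$e[k+1]=A(I_2-H_{\tilde z}^*(-gC))e[k]+\tilde a_{23}\epsilon[k]+\text{noise}.$$
The control term cancels here, so $e$ does not depend on $F_B$. The local filters give $\epsilon[k+1]=\tilde a_{33}\epsilon[k]+\text{noise}$. Hence $\eta[k+1]=\tilde A\eta[k]+\nu[k]$ at broadcast instants and $\eta[k+1]=\tilde A_0\eta[k]+\nu[k]$ otherwise. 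Lifting over one broadcast period, $\eta_l\triangleq\eta[ls]$ satisfies
$$\eta_{l+1}=\tilde A\tilde A_0^{s-1}\eta_l+\tilde\nu_l,$$
where $\tilde\nu_l$ is white with some covariance $\tilde Q\succ 0$.

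\textbf{Step 3: stability of the lifted matrix.} The matrix $\tilde A\tilde A_0^{s-1}$ is block upper triangular, with diagonal blocks $(A-BF_B)A^{s-1}$, $[A(I_2-H_{\tilde z}^*(-gC))]^s$ and $\tilde a_{33}^s$. The last two are Schur because the pairs $(A,C)$ are detectable and the Kalman gains come from stabilising Riccati solutions, namely \eqref{pztilde} and the local ones. For the first block we compute explicitly: the matrix $(A-BF_B)A^{s-1}$ has determinant $1-f_{b1}\tau+\cdots$ and trace $2-f_{b1}\tau s-2f_{b2}+\cdots$. Applying the Jury conditions to its characteristic polynomial should reduce them to the constraint $4/(f_{b1}\tau s+2f_{b2})>1$ together with a positivity condition. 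We expect this step to be the main obstacle: the algebra must be done carefully so that the conditions are exactly the stated constraint, possibly with implicit sign assumptions on $f_{b1},f_{b2}$.

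\textbf{Step 4: conclusion.} Given a Schur lifted matrix, $\mathbb E[\eta_l]\to0$, and the lifted covariance converges to the unique solution $\tilde P$ of the stated Lyapunov equation. Inside a period, $\eta$ evolves through finitely many applications of the bounded matrix $\tilde A_0$, so $\lim\mathbb E[\tilde z]=0$ and $\mathbb E[\tilde z\tilde z^\top]$ stays bounded. Along the broadcast instants it tends to $\tilde C\tilde P\tilde C^\top$, which is the constant $C$. Strictly, $\mathbb E[\tilde z\tilde z^\top]$ is $s$-periodic between broadcasts, so we read ``constant'' as the sampled limit, consistent with how \eqref{fbopti} evaluates $\tilde z[\infty]$.

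For the converse direction, if $F_B$ violates the constraint, the $(A-BF_B)A^{s-1}$ block has an eigenvalue of modulus at least one, and $\mathbb E[\tilde z]$ or its covariance fails to converge. The ``only if'' is therefore read as the necessity of the stability constraint in \eqref{fbopti}; optimality itself is not needed for convergence.
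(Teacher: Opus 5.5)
Your route is the paper's. You use the same augmented vector $\operatorname{col}(\tilde z, e_{\tilde z}^-, e^-)$, the same lifted matrix $\tilde A\tilde A_0^{s-1}$, block-triangularity with Kalman-stable lower diagonal blocks, and a $2\times 2$ stability test on $A_B(s-1)=(A-BF_B)A^{s-1}$. Your Step~1 makes explicit the cancellation $(q^\top\otimes B)u_{syn}[k]=0$ that the paper uses silently. Your remarks that $\mathbb{E}[\tilde z\tilde z^\top]$ is only asymptotically $s$-periodic, and that the ``only if'' can only mean necessity of the stability constraint, are fair.

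The gap is Step~3, the one non-routine step, which you leave as an expectation built on incorrect data. Since $\det A=1$, $A^{-1}B=(0,1)^\top$ and $A^{s-1}B=(\tau s,1)^\top$, one gets $\det\big((A-BF_B)A^{s-1}\big)=1-F_BA^{-1}B=1-f_{b2}$ and $\operatorname{tr}\big((A-BF_B)A^{s-1}\big)=2-F_BA^{s-1}B=2-f_{b1}\tau s-f_{b2}$, not the expressions you wrote. With $p(\lambda)=\lambda^2-(2-f_{b1}\tau s-f_{b2})\lambda+1-f_{b2}$ one has $p(1)=f_{b1}\tau s$, $p(-1)=4-f_{b1}\tau s-2f_{b2}$ and $p(0)=1-f_{b2}$. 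The Jury conditions are therefore exactly $f_{b1}>0$, $0<f_{b2}<2$ and $f_{b1}\tau s+2f_{b2}<4$.

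The constraint in \eqref{fbopti}, $4/(f_{b1}\tau s+2f_{b2})>1$, is equivalent to $0<f_{b1}\tau s+2f_{b2}<4$. It captures only $p(-1)>0$ and positivity of the \emph{sum}.

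It is necessary, which validates your converse. If the sum is $\ge 4$ then $p(-1)\le 0$. If it is $\le 0$ then either $f_{b1}\le 0$, so $p(1)\le 0$, or $f_{b2}\le 0$, so $|p(0)|\ge 1$. Each case gives an eigenvalue of modulus at least one.

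It is not sufficient. The choice $f_{b1}\tau s=1$, $f_{b2}=-1/4$ satisfies it, yet $|\lambda_1\lambda_2|=|p(0)|=5/4>1$. So $A_B(s-1)$ has an eigenvalue outside the unit disk, and $\mathbb{E}[\tilde z[ls]]$ does not converge to zero for generic initial conditions.

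Hence the ``if'' direction cannot be derived from the stated constraint alone. You need the separate sign conditions $f_{b1}>0$ and $f_{b2}>0$, not merely a ``positivity condition'' on $f_{b1}\tau s+2f_{b2}$, added to the feasible set of \eqref{fbopti}. With these added, the conditions are equivalent to $\lambda(A_B(s-1))\subset\mathbb{D}$, and your Steps~2--4 then close the argument. The paper's proof does not supply this either: it simply asserts that the characteristic polynomial yields exactly the stated condition.
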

\begin{proof}
For brevity, we will use notations $A_B(\gamma)\triangleq (A-B F_B)A^\gamma, \gamma \in \mathbb{R}^+$ and $\bar{v}_T[k]\triangleq\bar{v}[k]-\bar{v}_G[k]$. This implies $A_B(0)=(A-B F_B)$ and $A_B(\gamma)^0=I_2.$

 The effect on application of control input \eqref{ufinal} to difference equation of tracking error can be seen by deriving its closed loop equation,
    \begin{align}\label{ztildedyn}
       &\tilde{z}[k+1]=\nonumber\\
       &\begin{cases}
           A_B(0) \tilde{z}[k] + B F_B e_{\tilde{z}}^-[k]
  + \bar{v}_T[k]&\hspace{-0.35cm},\text{if} \;k+1= ls, l\in \mathbb{Z},\\ 
 A \tilde{z}[k] + \bar{v}_T[k]&\hspace{-0.35cm},\text{otherwise}.
       \end{cases} 
    \end{align}
Let us expand \eqref{ztildedyn} at the broadcast instant, 
\begin{align}
\tilde{z}[ls]=&A_B(s-1)^l  \tilde{z}[0]+\sum_{j=1}^l A_B(s-1)^{(l-j)}B F_B e_{\tilde{z}}^-[js-1] \nonumber\\&+\sum_{j=1}^{l}A_B(s-1)^{(l-j)} \bigg(\sum_{i=0}^{(s-2)} A_B(i)  \bar{v}_T[js-2-i]\nonumber\\&+ \bar{v}_T[js-1]\bigg),\label{ztildeexpansion} 
\end{align}
where $e_{\tilde{z}}^-[k]=\tilde{z}[k]-\hat{\tilde{z}}^-[k]$ is the error in estimation of tracking error. As the estimate $\hat{\tilde{z}}^-[k]$ is obtained from Kalman filter so $\underset{k\to \infty}{\lim}\mathbb{E}[e_{\tilde{z}}^-[k]]=0$. Moreover, the system noises satisfies $\mathbb{E}[v[k]]=0, \mathbb{E}[v_G[k]]=0,\; \forall\; k\in \mathbb{Z}$. Therefore, to stabilise $\tilde{z}[ls]$, it is apparent from \eqref{ztildeexpansion} that the coefficient matrix $A_B(s-1)$ should be Schur stable. In other words, $\lambda (A_B(s-1))\in \mathbb{D},$ $\mathbb{D}$ is the unit disk in complex plane such that $\mathbb{D}\triangleq \{y\in\mathbb{C}: |y|<1\}$. In the following paragraph, we will show that assigning $F_B$ as per \eqref{fbopti} indeed guarantees Schur stability of the coefficient matrix. 

The $H_2$-optimisation problem in \eqref{fbopti} is constructed using dynamics of $\tilde{z}[k],$ error in estimation of tracking error $\tilde{z}[k],$ $e_{\tilde{z}}^-[k]= \tilde{z}[k]-\hat{\tilde{z}}^-[k],$ and error in estimation of edge states $\xi[k]$,  $e^-[k]=\xi[k]-\hat{\xi}^-[k]$.  In fact, regardless of the time instant, the dynamics of $e_{\tilde{z}}^-[k]$ is 
\begin{align}
    e_{\tilde{z}}^-[k+1]=&A (I_2 - H_{\tilde{z}}^*(-gC))e_{\tilde{z}}^-[k] -A H_{\tilde{z}}^* w_G[k] \nonumber\\&+A H_{\tilde{z}}^*(q_A^\top V^\dagger\otimes C ) e^-[k]+ \bar{v}[k]- \bar{v}_G[k],\label{eztilde}
\end{align}
and the dynamics of $e^-[k]$ is 
    \begin{align}
         e^-[k+1]=&(I_{|\mathcal{E}|}\otimes A) (I_{2|\mathcal{E}|}-H^* (I_{|\mathcal{E}|}\otimes C)) e^-[k]\nonumber\\&-(I_{|\mathcal{E}|}\otimes A)  H^* w[k]+ (V\otimes I_n) v[k].
     \end{align}
 Now, let us consider the tracking error, its error in estimation, and error in estimation of edge states of ensemble MACs, at the time of the first broadcast $s$,
    \begin{align}
        \tilde{\rho}[s]= \tilde{A} \tilde{\rho}[s-1]+ \tilde{\rho}_n[s-1],\label{rhotilde}
    \end{align}
    where $\tilde{\rho}[s]=\begin{bmatrix}\tilde{z}[s]^\top&e_{\tilde{z}}^-[s]^\top &e^-[s]^\top\end{bmatrix}^\top$ and 
    \begin{align}
       &\tilde{\rho}_n[s] =\nonumber\\
       &\begin{bmatrix}
           \bar{v}[s]- \bar{v}_G[s]\\
          -A H_{\tilde{z}}^* w_G[s] +\bar{v}[s]- \bar{v}_G[s]\\
       -(I_{|\mathcal{E}|}\otimes A)  H^* w[s]+ (V\otimes I_n) v[s]
       \end{bmatrix}.\nonumber
    \end{align}
    Iteratively expanding \eqref{rhotilde} gives
    \begin{align}
         \tilde{\rho}[s]= &\tilde{A} \tilde{A}_0^{(s-1)}\tilde{\rho}[0]+\sum_{i=2}^{s} \tilde{A} \tilde{A}_0^{(i-2)}\tilde{\rho}_n[s-i]\nonumber\\&+\tilde{\rho}_n[s-1].
    \end{align}
    This equation captures the evolution of the tracking error and its associated variables for duration $s$ that will be repeated periodically.
    Implementation of discrete time Lyapunov stability theorem dictates that if $\tilde{A} \tilde{A}_0^{(s-1)}$ is a Schur stable matrix then for a positive definite symmetric matrix $\tilde{Q}$ which is equal to variance of $\sum_{i=2}^{s} \tilde{A} \tilde{A}_0^{(i-2)}\tilde{\rho}_n[s-i]+\tilde{\rho}_n[s-1]$, there exists a positive definite symmetric matrix $\tilde{P}$ which is equal to variance of $\tilde{\rho}[s]$.
    As $\tilde{A} \tilde{A}_0^{(s-1)}$ is an upper triangular matrix, it is enough to ascertain Schur stability of its diagonal matrices, mainly, $ A_B(s-1)$ as matrices $A(I_2-H_{\tilde{z}}^* (-gC))$, $(I_{|\mathcal{E}|}\otimes A) (I_{2|\mathcal{E}|}- H_{\tilde{z}}^*(I_{|\mathcal{E}|}\otimes C))$ are stable, since they are based on Kalman filter algorithm. 
    Using the characteristic polynomial of matrix $A_B(s-1)$, its stability condition turns out to be $\frac{4}{(f_{b1}\tau s+2f_{b2})}>1$.
    This is the condition assigned in the optimisation problem. 
    As a result, the provided optimisation problem solves for $F_B$ such that $\lambda (A_B(s-1))\in \mathbb{D}$ is satisfied. 

Based on \eqref{ztildeexpansion} for the given setup, we can now conclude that for $k=ls, l\to \infty,$ we get$ \underset{k\to \infty}{\lim}\mathbb{E}[\tilde{z}[k]]=0$.
    This means that as $k\to \infty$, we have $\mathbb{E}[x_i[k]]=\mathbb{E}[\bar{z}[k]]=\mathbb{E}[\bar{Z}[k]],\;\forall\, i\in \mathcal{N}.$
  It is known that variance of error in estimation of tracking error, system noises of the ensemble clocks, and GNSS receiver clocks are finite, so we also conclude that $\underset{k\to \infty}{\lim}\mathbb{E}[\tilde{z}[k]\tilde{z}^\top[k]]$ is a finite matrix. All synchronised MACs together converge to the average of GNSS anchor clocks due to the influence of the tracking control that is broadcasted every $s$ interval. 
  This is a realistic scenario for generating synchronised time, which has optimal frequency stability in the long term, by a collection of MACs.
\end{proof}

\begin{rem}
    Eventhough the ensemble MACs undergo synchronisation consistently in the background, the synchronisation process does not directly interfere with the tracking process as seen in expansion \eqref{ztildeexpansion} and vice versa.
\end{rem}
\begin{rem}\label{rem2}
Although there are other ways of constructing the tracking control, we need to always check the possibility of interference of the synchronisation process.
For example, let us assume that, in place of \eqref{ug}, the tracking control is constructed as per the following equation, 
\begin{align}
    u_G'[k]=\frac{1}{g}(\boldsymbol{1}_n\boldsymbol{1}_g^\top\otimes F_B) \hat{\xi}_G^-[k],\label{ugedge}
\end{align}
where $\hat{\xi}_G^-[k]$ is the estimate of edge state $\xi_G[k]$. 
The total control input applied to the ensemble MACs in this case is
\begin{align}
    &u[k]=\begin{cases}
          u_{syn}[k]+u_G'[k]\hspace{-0.35cm}&,\text{if} \, k+1 = ls, l\in \mathbb{Z}^+, \\
          u_{syn}[k]\hspace{-0.35cm}&,\text{otherwise}.
    \end{cases}\label{ualt}
\end{align}
Then, the tracking error dynamics at the broadcast instant can be expanded as 
\begin{align}
    &\tilde{z}[ls]= A_B(s-1)^l\tilde{z}[0]\nonumber\\
  &  -\sum_{j=1}^l A_B(s-1)^{(l-j)}\bigg(\big(\frac{1}{g}q_A^\top \otimes B F_B\big)z[js-1]\nonumber\\
    &+\frac{1}{g}(\boldsymbol{1}^\top_g\otimes B F_B ) e_G^-[js-1]\bigg)+\sum_{j=1}^{l}A_B(s-1)^{(l-j)} \nonumber\\&\bigg(\sum_{i=0}^{(s-2)} A_B(i)  \bar{v}_T[js-2-i]+ \bar{v}_T[js-1]\bigg),\label{ztildelsedge2}
\end{align}
where $e_G^-[k]=\xi_G[k]-\hat{\xi}_G^-[k].$
Significant difference in the expression of $\tilde{z}[ls]$ in \eqref{ztildelsedge2} and \eqref{ztildeexpansion} lies in presence of synchronisation terms in the former which can disturb stabilisation of the core equation $\tilde{z}[ls]=A_B(s-1)^l \tilde{z}[0].$ 
This can in turn affect the AVAR of all the ensemble MACs as will be demonstrated in the following section.\end{rem}

\section{Numerical Results}\label{sec:numericalresults}

For the purpose of illustration, we provide some numerical results obtained from MATLAB R2024a. 
The proposed tracking framework according to Theorem \ref{thmfb} is applied to three synchronising MACs that interact as per the communication graph in Figure \ref{fig:example}. 
Of the three MACs, clock 1 and clock 3 are adjacent to GNSS receiver atomic clocks (GACs) 1 and 2, respectively. 
The control feedback gains $F$ is selected so that the condition in \eqref{fcondition} is satisfied and $F_B$ are designed according to \eqref{fbopti}. 
The assigned sampling interval $\tau$ is 1 Sec.  
The GNSS time tracking control input is broadcasted by the supervisor every 1000 Sec, \textit{i.e.} $s=1000$ Sec. 
The system noise variances for ensemble MACs are as per Table \ref{tab:variances of the noises}  and the noise variances of the measurements are as per Tables \ref{tab:variances of the measurement noises} and \ref{tab:variances of the measurement noises 2}.
\begin{figure}[t!]
  \centering
   \includegraphics[width=0.8\linewidth]{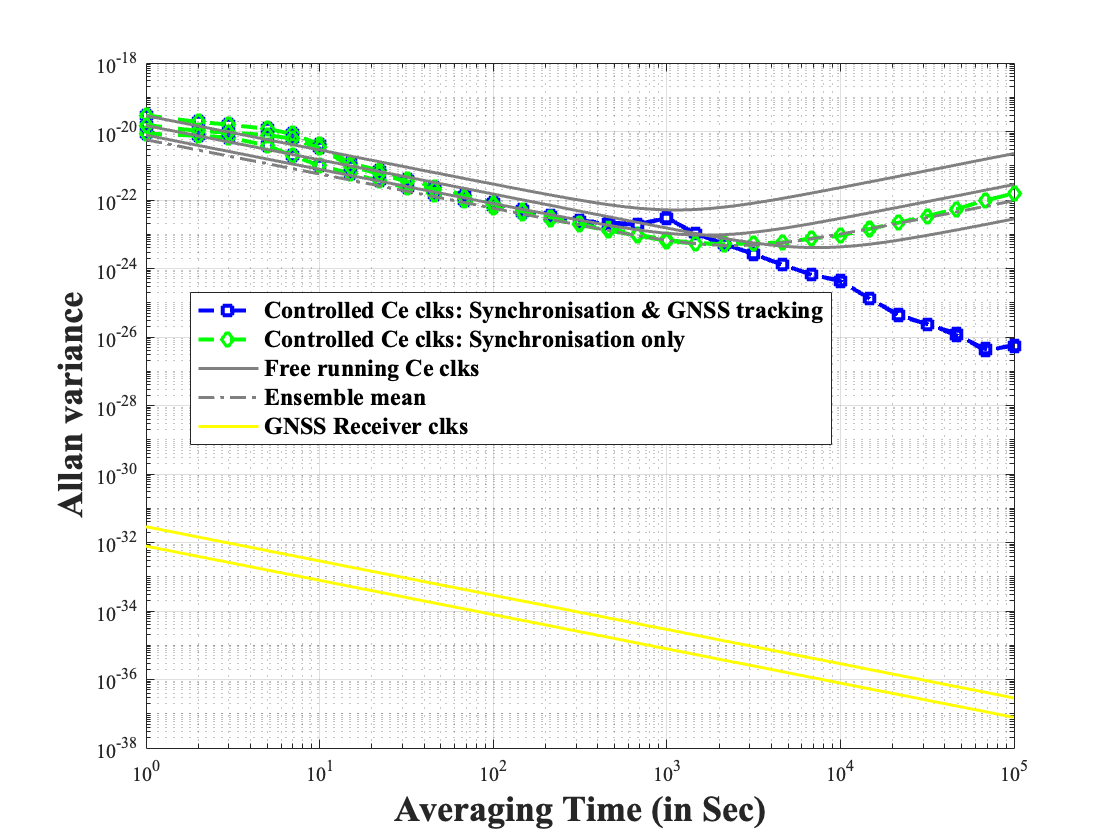}
    \caption{Statistical AVARs of the controlled MACs and the analytical AVARs of the free running MACs.}
    \label{fig:avsyn_track}
\end{figure}

\begin{figure}[t!]
    \centering
    \includegraphics[width=0.8\linewidth]{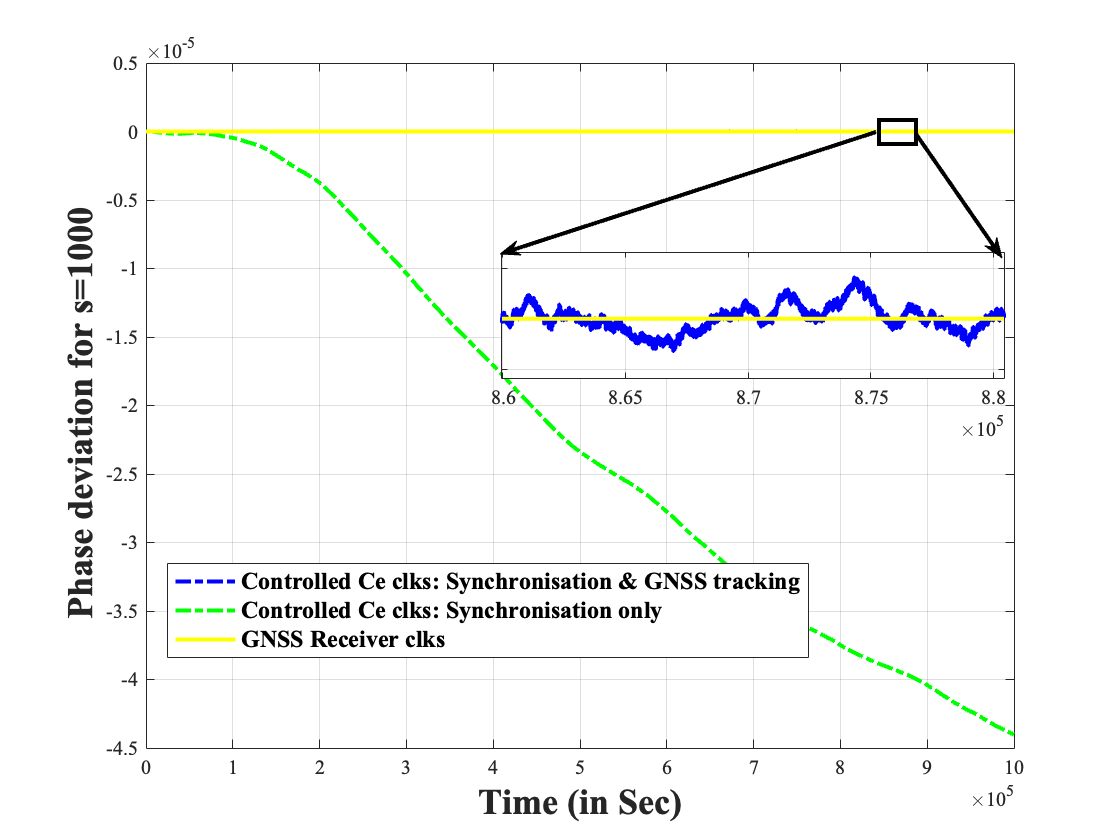}
    \caption{Time sequence of clock reading deviation of the MACs with respect to GACs.}
    \label{fig:crdsyn_track}
\end{figure}

\begin{table}[h!]
    \centering
    \begin{tabular}{c|c|c|c|c}
    Scale& $1\leftarrow 2$ & $1\rightarrow 2$& $2\leftarrow 3$& $2\rightarrow 3$\\
    \hline &&&&\\
      $10^{-28}$ &0.1895&0.0058&0.2228&0.0136\\&&&&
    \end{tabular}
    \caption{Variances of the noises in the edge measurements exchanged in the ensemble.}
    \label{tab:variances of the measurement noises}
\end{table}
\begin{table}[h!]
    \centering
    \begin{tabular}{c|c|c}
    Scale& GAC $1\rightarrow$ MAC $1$ & GAC $2\rightarrow$ MAC $3$\\
    \hline &&\\
      $10^{-16}$ &0.1721&0.0078\\&&
    \end{tabular}
    \caption{Variances of the noises in the edge measurements between GNSS anchor and their adjacent ensemble MACs }
    \label{tab:variances of the measurement noises 2}
\end{table}
In Figure \ref{fig:avsyn_track}, the three coinciding blue lines represents the AVARs of MACs when they are controlled as per \eqref{ufinal}, \textit{i.e.} they are synchronised to the virtual average clock, which in turn is steered towards the average of the GNSS receiver clocks periodically.
The three green plots represents the AVARs of MACs when they are controlled as per \eqref{usyni},  \textit{i.e.} they are synchronised to the virtual average clock only.  
The dotted gray line depicts the analytical mean AVAR of the MACS when they are free running, this has the best short term performance compared to any of the free running clocks.
Since, the green and blue plots gradually converge to the mean AVAR of MACs so they have good short term performance.
However, after $10^3$ Seconds, the performance of clocks that are only synchronising deteriorates.
On the other hand, due to the application of tracking control every $s=1000$ Seconds, the three blue plots further descend towards the analytical AVARs of the GACs (yellow plots). 
This indicates an improvement in the long-term frequency stability of the synchronising MACs.
This is reaffirmed by the clock reading deviation plots as shown in Figure \ref{fig:crdsyn_track}.
Here, three coinciding green plots represent the clock reading deviation of MACs that are only synchronising and they are randomly deviating away from the yellow lines which represent the clock reading deviation of the GACs. The zoomed subplot reveals the clock reading deviation of the MACs controlled according to \eqref{ufinal} which is represented by three coinciding blue plots oscillating about the yellow plots. 
\begin{figure}[t!]
    \centering
    \includegraphics[width=0.8\linewidth]{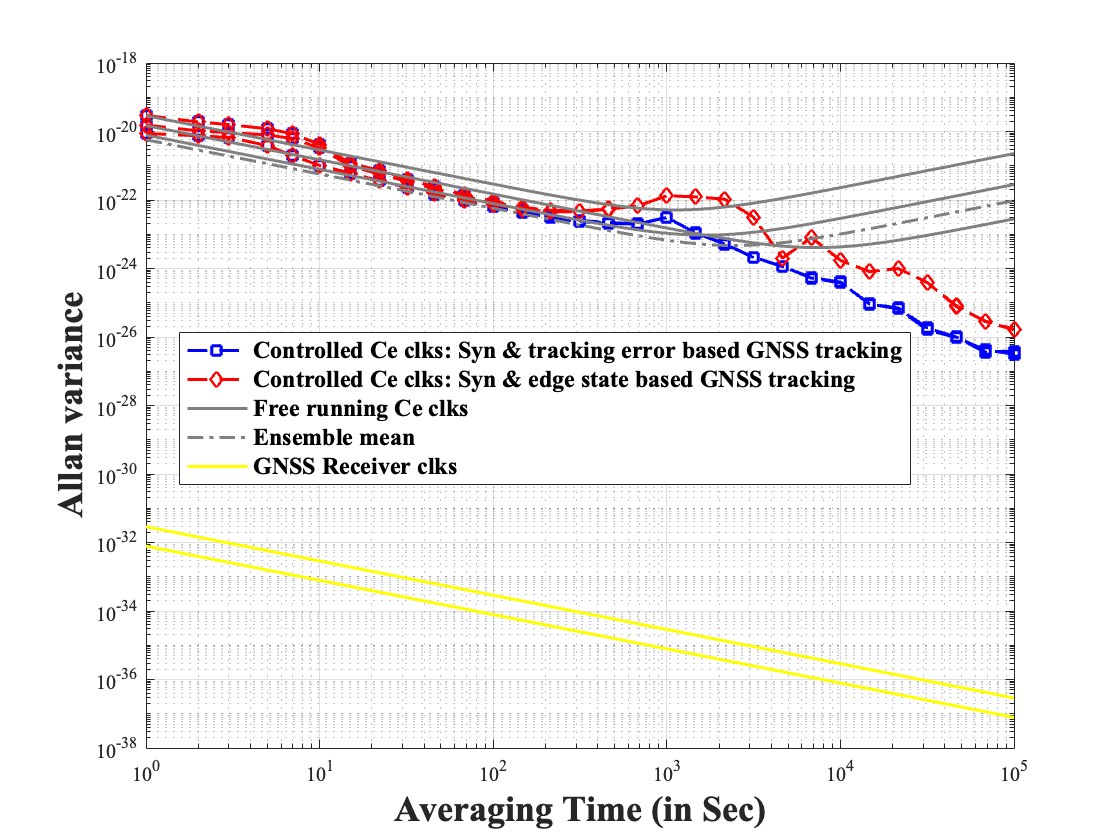}
    \caption{Comparison of statistical AVARs of the controlled MACs with the analytical AVARs of the free running MACs.}
    \label{fig:av_track_z_edge}
\end{figure}
\begin{figure}[t!]
    \centering
    \includegraphics[width=0.8\linewidth]{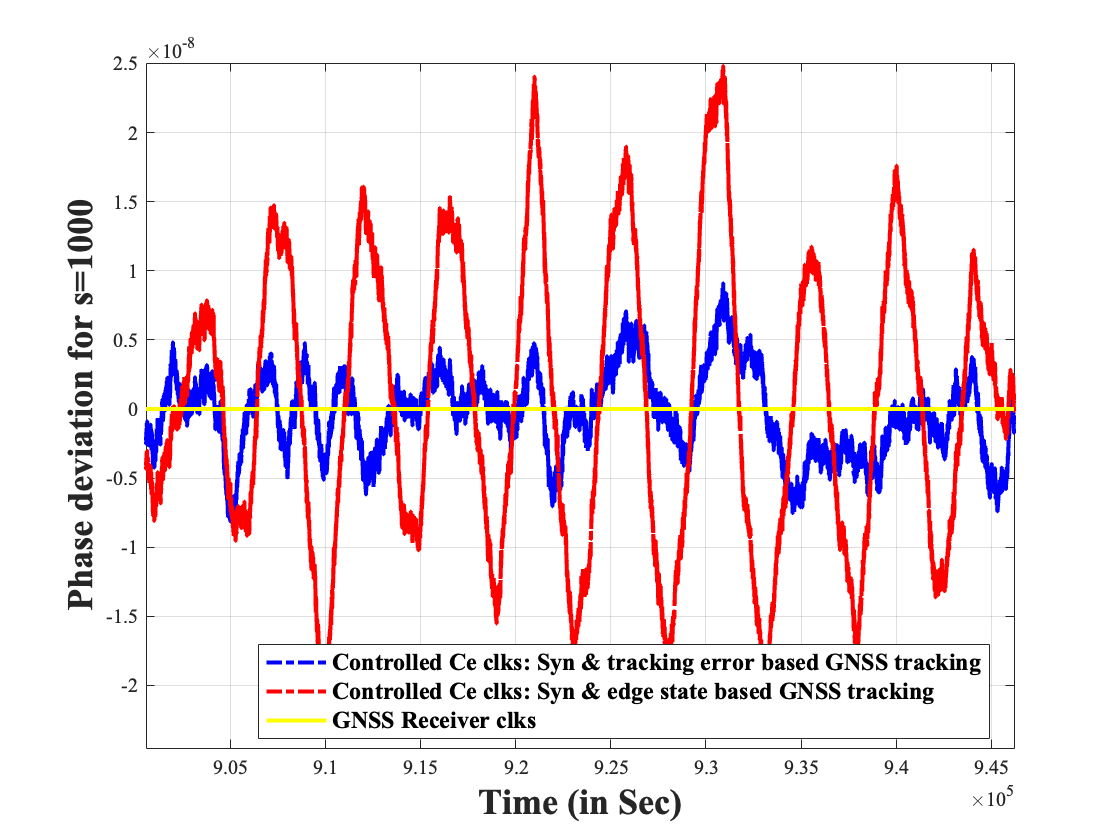}
    \caption{Time sequence of clock reading deviation of the MACs with respect to GACs.}
    \label{fig:cdr_track_z_edge}
\end{figure}

As explained in Remark \ref{rem2}, it is expected that the frequency stability of the MACs on application of feedback according to \eqref{ualt} will be inferior to the case when the feedback is as per \eqref{ufinal}. 
This is confirmed and demonstrated in Figure \ref{fig:av_track_z_edge}-\ref{fig:cdr_track_z_edge}. 
The three blue and red lines represent the performance of the MACs when the applied control signals are \eqref{ufinal} and \eqref{ualt}, respectively. 
It is imperative to note that the statistical AVAR is significantly better when the broadcasted tracking control is composed of the tracking error estimates as opposed to the GAC to adjacent clock edge estimates as there is a significant bump and subsequently upward shift in the red plots compared to the blue plots. 

In Figure \ref{fig:cdr_track_z_edge}, the oscillation amplitudes in the red plots are higher than in the blue plots. 
So, strategic application of broadcasted GNSS time tracking control to the synchronising MACs ensures that the synchronised time tracks the GNSS time to enhance its accuracy.
\section{Conclusion}\label{sec:conclusion}
In this manuscript, we propose an optimal GNSS time tracking algorithm for realisation of long-term stable and accurate time in an ensemble consisting of distributedly synchronising miniature atomic clocks. 
The proposed tracking control is constructed using the estimates of the tracking error which helps in stabilisation of the synchronised time generated by the MACs as it directly influences its ensemble mean. 
This results in reduction of AVAR over large averaging period leading to better long term performance by the MACs compared to the case where the MACs are only synchronising.
The current framework is tailored for an ensemble consisting of homogeneous MACs.
We are currently working towards extending the proposed tracking algorithm to the case of an ensemble consisting of heterogeneous MACs.

\end{document}